\title{Secure Multiparty Sessions with Topics\thanks{Partly supported
    by the COST Action IC1201 BETTY.}
}
\author{Ilaria Castellani 
\institute{INRIA Sophia Antipolis, France}
\and
Mariangiola Dezani-Ciancaglini\thanks{Partly supported by EU H2020-644235 Rephrase project, EU H2020-644298 HyVar project, ICT COST Actions IC1402 ARVI and Ateneo/CSP project RunVar.} 
\institute{University of Turin, Italy }
\and
Ugo de'Liguoro
\thanks{Partly supported by EU H2020-644235 Rephrase project, EU H2020-644298 HyVar project, ICT COST Actions IC1402 ARVI and Ateneo/CSP project RunVar.
}
\institute{University of Turin, Italy}
}
\begin{document}
\clearpage\maketitle
\thispagestyle{empty}

\begin{abstract}
  Multiparty session calculi have been recently equipped with security
  requirements, in order to guarantee 
  properties such as access control
  and 
  leak freedom. However, the proposed security
  requirements seem to be overly restrictive in some cases. In
  particular, a party is not allowed to communicate any kind of public
  information after receiving a secret information. This does not seem
  justified in case the two pieces of information are totally
  unrelated. The aim of the present paper is to overcome this
  restriction, by designing a type discipline for a simple multiparty
  session calculus, which classifies messages according to their
  topics and allows unrestricted sequencing of messages on independent
  topics.
\end{abstract}









\mysection{Introduction}

Today's distributed computing environment strongly relies on
communication. Communication often takes place among multiple parties,
which do not trust each other. This new scenario has spurred an active
trend of research on safety and security properties for multiparty
interactions. It is often the case that such interactions are
``structured'', i.e. they follow a specified protocol. Since their
introduction in~\cite{CHY08} (as an extension of binary session
calculi), \emph{multiparty session calculi} have been widely used to
model structured communications among multiple parties.
Session calculi are endowed with particular behavioural types called
\emph{session types}, which ensure that communications are not blocked
and follow the expected protocol. Lately, multiparty session calculi have been
enriched with security requirements, 
in order to ensure properties such as access control and leak freedom. 
An account of security analysis in multiparty
session calculi and similar formalisms may be found in the recent
survey~\cite{BCDDGPPTTV}.

A drawback of the existing security-enriched session calculi (such as
those reviewed in~\cite{BCDDGPPTTV}) is that the security requirements
are overly restrictive in some cases. In particular, a party is not
allowed to communicate any kind of public information after receiving
a secret information. This does not seem justified in case the two
pieces of information are totally unrelated. The aim of the present
paper is to overcome this restriction, by designing a type discipline
for a simple multiparty session calculus, which classifies messages
according to their \emph{topics} and allows unrestricted sequencing of
messages on independent topics.  
In this way, we can safely type processes that are
 rejected by previous type systems.

We start by illustrating our approach with a familiar example.

%

\begin{myexample}\label{ex:PC-informal}
  A Programme Committee (PC) discussion may be described as a session
  whose participants are the PC members and whose main topics are the
  submitted papers. All papers are assumed to be unrelated unless they
  share some author. A further topic, unrelated to the papers, is
  constituted by a bibliographic database, which is public but
  possibly not easily accessible to all PC members; hence all PC
    members are allowed to ask other PC members to fetch a document in
    the database for them.  Other topics, unrelated to the previous
  ones, are administrative data of interest to the PC, like email
    addresses.

    At the start of the session, all PC members receive a number of
    papers to review.  During the discussion, PC members receive
    reviews and feedback on the papers in their lot, but possibly also
    on other papers for which they have not declared
    conflict. In this scenario, our typing will ensure 
the following properties:
\begin{myenumerate}
\item A PC member $P_1$ who received confidential information on paper
  $\topic$ can forward this information to another PC member $P_2$ if
  and only $P_2$ is not in conflict with paper $\topic$ nor with any
  related paper; 
%
\item A PC member who received confidential information on some paper
  $\topic$ can subsequently send an email address to any other PC
  member, including
  those in conflict with paper $\topic$; 
\item The PC chair $P_0$ is allowed to request a document
belonging to the bibliographic database to any PC member at any time, even
    after receiving confidential information on some 
paper $\topic$.
This could happen for instance if a PC member $P_1$ in charge of paper
$\topic$ wishes to compare it with a previous paper by a PC member
$P_2$ who is in conflict with paper $\topic$.  Suppose this paper is
in the database but $P_1$ cannot access it; then $P_1$ will express
her concerns about paper $\topic$ to the PC chair $P_0$ and ask him to
retrieve the document from the database.  The point is that $P_0$
himself may not have an easy access to the document; in this case
$P_0$ will forward the request directly to $P_2$.
Intuitively, this should be allowed because the requested document has
the topic $\topicpsi$ of the database,
which is not related to topic $\topic$.
%
\end{myenumerate}
\end{myexample}
In the above example, Property 1 is an access control
(AC) property, which will be handled by assigning to each participant a
reading level for each topic; Property 2 is a leak freedom (LF) 
 property, where the usual ``no write-down'' condition
is relaxed when the topic of the output is independent from that of the
preceding input; finally, Property 3 involves both
AC and LF issues. 
Our type system will ensure a {\em safety} property that is a combination
of AC and of our relaxed LF property.

The next sections present the untyped calculus, the safety definition,
the type system and the main properties of the typed calculus. 


\mysection{Synchronous Multiparty Session Calculus}\label{sec:msc} 
We introduce here our synchronous
multiparty session calculus, which is essentially the LTS version of
the calculus considered in~\cite{DGJPY15}.
\myparagraph{Syntax} 
A multiparty session is an abstraction for
describing multiparty communication protocols~\cite{CHY08}.
It consists of a series of
interactions between a fixed number of participants.

We use the following base sets: \emph{security levels}, ranged over by
$\lev,\levp,\dots$; \emph{topics}, ranged over by
$\topic,\topicpsi,\dots$; \emph{values with levels and topics}, ranged
over by $\val^{\level,\topic},\valu^{\level',\topicpsi},\ldots$;
\emph{expressions}, ranged over by $\e,\e',\ldots$; \emph{expression
  variables}, ranged over by $x,y,z\dots$; \emph{labels}, ranged over
by $\cu,\cuu,\dots$; \emph{session participants}, ranged over by
$\pp,\pq,\ldots$; \emph{process variables}, ranged over by
$X,Y,\dots$; \emph{processes}, ranged over by $P,Q,\dots$; and
\emph{multiparty sessions}, ranged over by $\N,\N',\dots$.

\emph{Processes} $\PP$ are defined by:
\begin{myformula} {\begin{array}{lll}\PP   &   ::=  &
     \procout \q  {\lambda(\e)} \PP  \sep   \procin{\pp}{\lambda(\x)}{\Q}    
     \sep     \cond{\e} \PP  \PP    
            \sep     \PP + \PP                                    
            \sep     \mu X.\PP  
            \sep     X                                           
            \sep     \inact  \end{array}}
\end{myformula}
 The output process $\procout{\q}{\lambda(\e)}{\PP}$ sends
             the value of expression $\e$ with label $\lambda$ to
             participant $\q$.
The input process
$\procin{\pp}{\lambda(\x)}{\Q}$
waits for the value of an expression with label $\lambda$ from participant $\pp$.
The operators of internal and external
choice, denoted $\oplus$ and $+$ respectively, are standard.
We take an equi-recursive view of processes, not distinguishing
between a process $\mu X.\PP$ 
and its unfolding
$\PP\sub{\mu X.\PP}{X}$. 
 We assume that the recursive processes are guarded, i.e. $\mu X.X$ is not a process. 

\smallskip

A \emph{multiparty session} $\N$ is a parallel composition of pairs
(denoted by $\pa\pp\PP$) made of a participant and a process:
\myformulaC{\begin{array}{lll}\N & ::= & \pa\pp\PP \sep \N
    \pc\N \end{array}} 
We will use $\sum\limits_{i\in I} \PP_i$ as
short for $\PP_1+\ldots+\PP_n,$ and $\prod\limits_{i\in I}
\pa{\pp_i}{\PP_i}$ as short for $\pa{\pp_1}{\PP_1}\pc \ldots\pc
\pa{\pp_n}{\PP_n},$ where $I=\set{1,\ldots,n}$.



Security levels and topics, which appear as superscripts of values,
are used to classify values according to two criteria: their degree of
confidentiality and their subject. The use of these two parameters
will become clear in Section~\ref{sec:safety}.

Our calculus is admittedly very simple, since processes are sequential
and thus cannot be involved in more than one session at a time. As a
consequence, it is not necessary to introduce explicit session
channels: within a session, processes are identified as session
participants and can directly communicate with each other, without
ambiguity since the I/O operations mention the communicating partner.


\myparagraphB{Operational semantics} The value $\val^{\level,\topic}$
of an expression $\e$ (notation $\eval\e{\val^{\level,\topic}}$) is
defined as expected, provided that all the values appearing in $e$
have the same topic $\topic$ (this will be guaranteed by our typing)
and the join of their security levels is $\ell$. The semantics of
processes and sessions is given by means of two separate LTS's. The
actions of processes, ranged over by $\vartheta$, are either the
silent action $\tau$ or a visible I/O action $\alpha$ of the form
$\q!\lambda(\val^{\level,\topic})$ or $\pp
?\lambda(\val^{\level,\topic})$.  The actions of sessions, ranged over
by $\kappa$, are either $\tau$ or a \emph{message} of the form
$\comm\pp \val{\level}\lambda{\topic}\q$.

The LTS's for processes and sessions are given by the rules in
Table~\ref{LTS}, defined up to a standard structural congruence
denoted by $\equiv$ (by abuse of notation we use the same symbol for
both processes and sessions), whose definition is in Table \ref{tab:sync:congr}.

 \begin{mytable}{h}
\centerline{$
\begin{array}[t]{@{}c@{}}
\inferrule[\rulename{s-intch 1}]{}{
   \cond{}\PP \Q \equiv \cond{}\Q  \PP
      }
\qquad
\inferrule[\rulename{s-intch 2}]{}{
   \cond{}{(\cond{}\PP \Q ) } \R\equiv \cond{}\PP  {(\cond{}\Q \R)}
      }
\\\\
\inferrule[\rulename{s-extch 1}]{}{
   \PP \external \Q \equiv \Q \external \PP
      }
\qquad
\inferrule[\rulename{s-extch 2}]{}{
   (\PP \external \Q ) \external \R\equiv \PP \external (\Q \external \R)
      }
\\\\
\inferrule[\rulename{s-rec}]{}{
   \mu X.\PP \equiv  \PP\Subst{\mu X.\PP}{X}
      }
\qquad 
\inferrule[\rulename{s-multi}]{}{
   \PP \equiv \Q  \Rightarrow \pa\pp\PP\equiv\pa\pp\Q
      }
\qquad      
  \inferrule[\rulename{s-par 1}]{}{
    \pa\pp{\inact} \pc \N \equiv \N
      }
\\\\
  \inferrule[\rulename{s-par 2}]{}{
    \N \pc \N' \equiv \N' \pc \N
      }
 \qquad
 \inferrule[\rulename{s-par 3}]{}{
    (\N \pc \N') \pc \N'' \equiv \N \pc( \N' \pc \N'')
      }
\end{array}
$}
\caption{Structural congruence.}
\label{tab:sync:congr}
\end{mytable}

\begin{mytable}{}
\[
\begin{array}[ht]{@{}c@{}}
\inferrule[\rulename{r-output}]{
     \eval{\e}{\val^{\level,\topic}}}{
    \procout \q {\lambda(\e)} \PP
    \lts{\q!\lambda(\val^{\level,\topic})}
   \PP}
  \qquad\qquad\qquad
\inferrule[\rulename{r-input}]{}{
   \procin{\pp}{\lambda(\x)}{\Q}
    \lts{\pp ?\lambda(\val^{\level,\topic})}
   \Q\sub{\val^{\level,\topic}}{\x}
   }
  \\ \\
  \inferrule[\rulename{r-int-choice}]{}{
    \cond{\e}{\PP}{\Q}   \lts{\tau} \PP
   }
    \qquad
    \inferrule[\rulename{r-ext-choice}]{\PP\lts{\alpha}\PP'
   }{
    \PP\external\Q  \lts{\alpha} \PP' 
   }
 \qquad
  \inferrule[\rulename{r-struct-proc}]{
  \PP'_1\equiv \PP_1 \quad \PP_1\lts{\vartheta} \PP_2 \quad \PP_2 \equiv \PP'_2
  }
  { 
   \PP'_1 \lts{\vartheta} \PP'_2
  }
  \end{array}
\]
\[
\begin{array}[ht]{@{}c@{}}
\inferrule[\rulename{r-comm}]{
     \PP\lts{\q!\lambda(\val^{\level,\topic})}\PP'\quad \Q\lts{\pp ?\lambda(\val^{\level,\topic})}\Q'}{
    \pa\pp\PP \pc \pa\q \Q
    \lts{\comm\pp \val{\level}\lambda{\topic}\q} 
    \pa\pp{\PP'}\pc\pa\q{\Q'}
    }
  \qquad
  \inferrule[\rulename{r-tau}]{\PP\lts{\tau}\PP'
   }{
    \pa\pp{\PP}  \lts{\tau} \pa\pp\PP'
   }
 \\ \\
 \inferrule[\rulename{r-context}]{
    \N  \lts{\kappa} \N'
  }
  {  \N \pc \N'' \lts{\kappa}  \N' \pc\N''
  }   
  \qquad
  \inferrule[\rulename{r-struct-sess}]{
  \N'_1\equiv \N_1 \quad \N_1\lts{\kappa} \N_2 \quad \N_2 \equiv \N'_2
  }
  { 
   \N'_1 \lts{\kappa} \N'_2
  }
\end{array}
\]
\caption{\label{LTS} LTS rules for processes and sessions.}
\end{mytable}


\mysection{Safety}\label{sec:safety}
Our notion of safety for sessions has two facets: \emph{access
  control} and information flow security or \emph{leak-freedom}. We
assume that security levels $\level, \level'$ form a finite lattice,
ordered by $\levleq$. We denote by $\join$ and $\meet$ the join and
meet operations on the lattice, and by $\bot$ and $\top$ its bottom
and top elements.  The partial ordering $\levleq$ is used to classify
values according to their degree of confidentiality: a value of level
$\bot$ is public, a value of level $\top$ is secret. The ordering also
indicates the authorised direction for information flow: a flow from a
value of level $\lev$ to a value of level $\lev'$ is allowed if and
only if $\lev \levleq \lev'$.

Furthermore, each session participant $\pp$ has a {\em reading
  level} for each topic $\topic$, denoted by $\rho(\pp,\topic)$. In a safe session,
participant $\pp$ will only be able to receive values of level $\level
\levleq \rho(\pp,\topic)$ on topic $\topic$. This requirement assures
  access control.
  
  We also assume an irreflexive and symmetric relation of {\em
      independence} between topics: we denote by
    $\diff\topic\topicpsi$ the fact that $\topic$ and $\topicpsi$ are
    {\em independent} and by $\rel\topic\topicpsi$ (defined as
    $\neg(\diff\topic\topicpsi)$) the fact that $\topic$ and
    $\topicpsi$ are {\em correlated}.  Neither of these two relations
  is transitive in general, as illustrated by
  Example~\ref{ex:PC-informal}, where $\rel{}{}$ is the co-authorship
  relation between papers and $\diff{}{}$ is its complement.
%
%
%

We say that a session is \emph{leak-free}
if, whenever a
participant $\pp$ receives a value of level $\level$ on topic
$\topic$, then $\pp$ can subsequently only send values of level
$\level'\levgeq\level$ on topics related to $\topic$. 
For instance, the output of 
level $\level'$ could be placed within
an internal choice, and this choice could be resolved depending on the
input of level $\level$, since this input is on a related topic.
%
To formalise this requirement we need to look at the \emph{traces} of
multiparty sessions, ranged over by $\sigma, \sigma'$ and defined 
as the sequences of actions that label a transition
sequence. Formally, $\sigma$ is a word on the alphabet containing
$\tau$ 
and the messages
$\comm{\particip}{\val}{{\level}}\lambda{\topic}{\participq}$ for all
participants $\pp, \pq$, labels $\lambda$, values $\val$, security
levels $\level$ and topics $\topic$. Safety is now defined as
  follows, using the 
notion of relay trace: 
\begin{mydefinition}
A \emph{relay trace} is a trace of the form:
\myformula{\sigma \cdot
  \comm{\particip}{\val}{\level}\lambda{\topic}{\participq}\,
 \cdot \sigma' \cdot
 \comm{\participq}{\valu}{{\level'}}{\lambda'}{\topicpsi}{\participr}}
The middle participant $\participq$ is called the \emph{mediator}
between participants $\particip$ and $\participr$. 
\end{mydefinition}

\begin{mydefinition}
\label{def:safety}
A multiparty session $\N$ is {\em safe} if it satisfies: 
\begin{myenumerate}
\item \emph{Access control (AC):} whenever 
$\sigma \cdot
  \comm{\particip}{\val}{{\level}}\lambda{\topic}{\participq}$ 
is a trace of $\N$,
then $\level\levleq\rho(\participq,\topic)$;
\item \emph{Leak freedom (LF):} whenever $\sigma \cdot
  \comm{\particip}{\val}{{\level}}\lambda{\topic}{\participq}\, \cdot
  \sigma' \cdot
  \comm{\participq}{\valu}{{\level'}}{\lambda'}{\topicpsi}{\participr}$
 is a relay trace of $\N$, then either $\level \levleq \level'$ or
  $\diff\topic\topicpsi$.
\end{myenumerate}
\end{mydefinition}
For example the relay trace $\comm \pp \true \top \lambda\topic \pq \cdot
 \comm \pq \false \bot {\lambda'}\topicpsi \pr$ satisfies
the condition of the previous definition if $\rho(\pp,\topic)=\top$ and
$ \diff\topic\topicpsi$.  Intuitively, 
in spite of the ``level drop'' between the two messages, 
their sequencing is harmless because they
  belong to two different conversations.
%
%


 \begin{myexample}\label{ex2}
The PC discussion described in Example \ref{ex:PC-informal}
may be formalised as the session:

\[ \N_{PC} = \prod\limits_{i\in I}
\pa{\pp_i}{\PP_i} \quad
{\rm where} ~ I=\set{1,\ldots,n}.
\] 
Here each participant $\pp_i$ represents a PC member, and $\PP_i$ is
the associated process.  Let us see how the three properties discussed
in Example \ref{ex:PC-informal} can be expressed in $\N_{PC}$.
\begin{enumerate}
\item (AC issue) Here we assume that $\pp_1$ is entitled to receive a
  confidential value $\val^{\level,\topic}$ from some $\pp$. Thus
  $\lev \neq \bot$ and $\level\sleq\rho(\pp_1,\topic)$. Subsequently
  $\pp_1$ forwards this information to $\pp_2$, hence there is a relay
  trace of the form
  $\commsimp{\particip}{\particip_1}{\val}{\level}{\topic}\cdot\commsimp{\particip_1}{\particip_2}{\val}{\level}{\topic}$.
  This trace trivially satisfies LF, and the second message satisfies
  AC if and only if $\level\sleq\rho(\pp_2,\topic)$. Then, if we
  set $\rho(\pp_2,\topic) = \bot\,$ for any $\pp_2$ in conflict with
  $\topic$ and $\rho(\pp_2,\topic) = \top\,$ for any other $\pp_2$,
  Property 1 will be ensured by the safety of $\N_{PC}$.
\item (LF issue) Here the relay trace has the form
$\commsimp{\particip}{\particip_1}{\val}{\lev}{\topic}\cdot\commsimp{\particip_1}{\particip_2}{\val_1}{\bot}{\topic_1}$,
where again $\bot\neq\lev \sleq\rho(\pp_1,\topic)$.
  This trace satisfies LF because 
$\topic_1$ is independent from $\topic$. The second message trivially satisfies
  AC because the email address $\val_1^{\bot,\topic_1}$ has level
  $\bot$ and thus can be read by any participant.

\item (Combination of AC and LF) Here $\pp_1$ sends to the PC Chair
  $\pp_0$ a confidential value $\val_1^{\level,\topic}$, followed by a
  request for a public document of topic $\topicpsi$, and then waits to
  receive this document from $\pp_0$. The behaviour of $\pp_0$ is
  dual for the first two steps, but then $\pp_0$ asks
  $\pp_2$, who is in conflict with paper $\topic$, to fetch the
  document for him, before sending it back to $\pp_1$.

Processes implementing the behaviour of the PC Chair and of the involved PC members are:
\myformula{\begin{array}{lll}\PP_0 &=&
  \pp_1?(x).\pp_1?(y).\pp_2!(y).\pp_2?(z).\pp_1!(z).\inact\\
 \PP_1 &=&
  \pp_0!(\val_1^{\level,\topic}). \pp_0!(\val_2^{\bot,\topicpsi}).\pp_0?(x).\inact\\
   \PP_2 &=&
  \pp_0?(x).\pp_0!(\val_3^{\bot,\topicpsi}).\inact
  \end{array}}
%


Intuitively, the reading levels of $\pp_0, \pp_1$ and $\pp_2$ should
be $\rho(\pp_0,\topic) = \rho(\pp_0,\topicpsi) = \top$,
$\rho(\pp_1,\topic) = \lev$, $\rho(\pp_1,\topicpsi) = \bot$, and
$\rho(\pp_2,\topic) = \rho(\pp_2,\topicpsi) = \bot$.
%
%
Consider now the following trace of session $\N_{PC}$: 
\[\sigma =
\commsimp{\particip_1}{\particip_0}{\val_1}{\level}{\topic}\cdot
\commsimp{\particip_1}{\particip_0}{\val_2}{\bot}{\topicpsi}\cdot
\commsimp{\particip_0}{\particip_2}{\val_2}{\bot}{\topicpsi}\cdot
\commsimp{\particip_2}{\particip_0}{\val_3}{\bot}{\topicpsi}\cdot
\commsimp{\particip_0}{\particip_1}{\val_3}{\bot}{\topicpsi}\]
With the above reading levels, each message in trace $\sigma$
satisfies AC.
Moreover, trace $\sigma$ contains three relay traces: 
$\commsimp{\particip_1}{\particip_0}{\val_2}{\bot}{\topicpsi}\cdot\commsimp{\particip_0}{\particip_2}{\val_2}{\bot}{\topicpsi}$,
$\commsimp{\particip_0}{\particip_2}{\val_2}{\bot}{\topicpsi}\cdot
\commsimp{\particip_2}{\particip_0}{\val_3}{\bot}{\topicpsi}$,
and $\commsimp{\particip_2}{\particip_0}{\val_3}{\bot}{\topicpsi}\cdot
\commsimp{\particip_0}{\particip_1}{\val_3}{\bot}{\topicpsi}$,
which trivially satisfy LF since all values have
level $\bot$. 


\end{enumerate}

\end{myexample}

\mysection{Type System}\label{sec:ts} Our type system enriches the
system of~\cite{DGJPY15} with security levels and topics.  

\myparagraph{Types} {\em Sorts}  are ranged over by $\SOT$ and defined by:\qquad
$ \SOT \quad ::= \quad \tnat \sep \tint \sep\tbool\sep\String$
\\
{\em Global types} describe the whole conversation scenarios of
multiparty sessions. They are generated by:
\begin{myformula}{
\begin{tabular}{rclrrrclr}
 $\G$ & $::=$ & $\Gvti {\level} {\topic} \pp \q \lambda \SOT {\G}$  & \sep &  $\mu\ty. \G$ & \sep &$\ty$   &\sep &$\tend$
\end{tabular}}\end{myformula}
{\em Session types} correspond to the views of the individual
participants. 
They can be either unions of outputs or intersections of inputs. 
The grammar of session types, ranged over by $\T$, is then 
\begin{myformula}{\T ::= 
           \bigvee_{i\in I}\tout \q{\lambda_i}{\SortI{\SOT}{i}{\level}{\topic}}.\T_i \sep\bigwedge_{i\in I}\tin\pp{\lambda_i}{\SortI{\SOT}{i}{\level}{\topic}}.\T_i \sep
           \mu\ty. \T \sep \ty \sep \tend
}\end{myformula}
We require  that $\lambda_i\not=\lambda_j$ with $i\not=j$ and $i,j\in I$. 

\smallskip

We give now conditions on session types which will guarantee session safety. 

\begin{mydefinition}
A pair of a security level $\level$ and a topic $\topic$ {\em agrees}
with a session type $\T$ (notation $ \pair\level\topic\prec\T$) if 
$\T$ specifies that only values of level $\level'\sqsupseteq\level$ are sent on topics related with $\topic$:
\myformula{\begin{array}{@{}c@{}}
\inferrule[\rulename{agr-end}]{}
  {\pair\level\topic\prec\tend}\qquad
\cinferrule[\rulename{agr-out}]{
 \forall i\in I: \pair\level\topic\prec\T_i \text{ (either } \level\levleq\lev_i' \text{ or } \diff\topic\topicpsi_i)
}{\pair\level\topic\prec\bigvee_{i\in I}
 \tout\q{\lambda_i}{\SOT^{\level_i',\topicpsi_i}}.\T_i
 }
\qquad
\cinferrule[\rulename{agr-in}]{
\forall i\in I: \pair\level\topic\prec \T_i  }{
  \pair\level\topic\prec\bigwedge_{i\in I}\tin\pp{\lambda_i}{\SortI{\SOT}{i}{\level}{\topic}}.\T_i
}
\end{array}}
\end{mydefinition}
\begin{mydefinition}
A closed session type $\T$ is a {\em safe session type} if $\safe\T$
can be derived from the rules:
 \myformula{
\begin{array}{@{}c@{}}
\inferrule[\rulename{safe-end}]{}
  {\safe\tend}\qquad
\cinferrule[\rulename{safe-out}]{
 \forall i\in I: \quad\safe{\T_i}\quad \level_i\levleq\rho(\q,\topic_i)
}{\safe{\bigvee_{i\in I}
 \tout\q{\lambda_i}{\SortI{\SOT}{i}{\level}{\topic}}.\T_i}
 }
\qquad
\cinferrule[\rulename{safe-in}]{
\forall i\in I: \quad\safe{\T_i}\quad\pair{\level_i}{\topic_i}\prec \T_i  }{
  \safe{\bigwedge_{i\in I}\tin\pp{\lambda_i}{\SortI{\SOT}{i}{\level}{\topic}}.\T_i}
}
\end{array}
}
\end{mydefinition}
The double line in the above rules means that they are {\em coinductive}~\cite[21.1]{pierce}. This is necessary since session types are recursive and under the equi-recursive approach 
the types in the premises can coincide with the types in the conclusion. For example $\pp?\lambda(\tbool^{\top,\topic}).\pr!\lambda'( \tbool^{\bot,\topicpsi}).\tend$ is a safe type if $\rho(\pp,\topic)=\top$ and $ \diff\topic\topicpsi$. 

We only allow safe types in the typing rules for processes and
multiparty sessions. As will be established in Theorem~\ref{thm:type-safe}, the conditions
in rules \rulename{safe-out} and \rulename{safe-in} of safe session types
assure respectively access control and leak freedom, namely 
Properties 1 and 2 of session safety (Definition~\ref{def:safety}).
\myparagraph{Typing rules} We distinguish three
kinds of typing judgments. Expressions are typed by sorts with levels
and topics, processes are typed by session types and multiparty
sessions are typed by global types:
\begin{myformula}{
  \dere\Gamma\e{\SOT^{\level,\topic}} \qquad\qquad \derS \Gamma\PP\T\Delta \qquad\qquad \ders\N\G
}\end{myformula}
Here $\Gamma$ is the {\em environment} that associates expression
variables with sorts (decorated by levels and topics) and process
variables with safe session types: $\Gamma ::= \emptyset \sep \Gamma,
x:\SOT^{\level,\topic} \sep \Gamma, X:\T$.

The typing rules for expressions in Table \ref{tab:exp-type} are
almost standard, but for the treatment of topics. A value of level
$\level$ and topic $\topic$ is typed with the appropriate sort
type decorated by $\level$ and $\topic$.  Expressions cannot contain
subexpressions of different topics.  This limitation could be easily
overcome by allowing sets of topics. In this way we could associate to an expression the set of topics of its subexpressions. 
The sets of topics would naturally build 
  a lattice, where the order is given by subset inclusion.

\begin{mytable}{h}
\centerline{$
\begin{array}{@{}c@{}}
\inferrule[\rulename{exp-var}]{}
  {\Gamma, \x:\SOT^{\level,\topic} \vdash \x:\SOT^{\level,\topic} }
 \qquad
\inferrule[\rulename{exp-val}]
{}
{
  \Gamma \vdash \val^{\level,\topic}:\SOT^{\level,\topic}
}
\qquad
\inferrule[\rulename{exp-op}]{
 \Gamma \vdash \e_1: \SOT_1^{\level_1,\topic} 
 \quad
 \Gamma \vdash \e_2: \SOT_2^{\level_2,\topic}
 \quad
 \op: \SOT_1,\SOT_2 \rightarrow \SOT_3
}{
  \Gamma \vdash \e_1 \op \e_2: \SOT_3^{\level_1 \sqcup \level_2,\topic}
 }
\end{array}
 $}
  \caption{\label{tab:exp-type} Typing rules for expressions.}
\end{mytable}

\begin{mytable}{}
\centerline{$
\begin{array}{c}
\myrule{\dere\Gamma\e{\SOT^{\level,\topic}}~~\  \derS {\Gamma}{\PP}\T\Delta
  }{\derS \Gamma{\procout  \q   {\lambda(\e)}
      \PP}{\q!\lambda(\SOT^{\level,\topic}).\T}{\Delta\sqcap\set{\topic:\level}}}{[t-out]}
  \qquad
  \myrule{\derS {\Gamma,x:\SOT^{\level,\topic}}{\Q}\T\Delta
  }{\derS \Gamma{{\procin  \pp   {\lambda(\x)} \Q}}{\pp?\lambda(\SOT^{\level,\topic}).\T}{\Delta}}{[t-in]}
   \\\\
   \myrule{
  \derS {\Gamma}{\PP_1}{\T_1}{\Delta_1}~~\derS {\Gamma}{\PP_2}{\T_2}{\Delta_2}}{\derS \Gamma{{\cond{\e}{\PP_1}{\PP_2}}}{\T_1\vee \T_2}{\Delta_1\sqcap\Delta_2}}{[t-i-choice]}
   \qquad  
    \myrule{\derS {\Gamma}{\PP_1}{\T_1}{\Delta_1}~~\derS {\Gamma}{\PP_2}{\T_2}{\Delta_2}}{\derS \Gamma{{{\PP_1}+{\PP_2}}}{\T_1\wedge \T_2}{\Delta_1\sqcap\Delta_2}}{[t-e-choice]}
  \\\\
  \myrule{\derS {\Gamma,X:\T}{\PP}{\T}\Delta}
  {\derS \Gamma{\mu X.\PP}\T\Delta}{[t-rec]}
  \quad
   \derS {\Gamma,X:\T}{X}{\T}\emptyset~~\rln{[t-var]}  
    \qquad
   \derS \Gamma{ \inact}\tend\emptyset~~\rln{[t-$\inact$]}
    \end{array}
$}
\caption{\label{Ntab:sync:typing} Typing rules for processes.}
\end{mytable}

Processes have the expected types. Let us note that the syntax of
session types only allows output processes in internal choices (typed by
unions) and input processes in external choices (typed
by intersections). Table~\ref{Ntab:sync:typing} gives the typing rules for
processes. For example, if $\rho(\pp,\topic)=\top$ and $
\diff\topic\topicpsi$ we can derive $\derS{}
{\procin{\pp}{\lambda(\x)}{\procout
    \pr{\lambda'(\false^{\bot,\topicpsi})}{\inact} }}
{\pp?\lambda(\tbool^{\top,\topic}).\pr!
  \lambda'(\tbool^{\bot,\topicpsi}).\tend}{} $, while this process
cannot be typed otherwise. 
Notice that the process obtained by erasing topics is not typable in
the system of~\cite{CCD14}, where the typing rule for input
  requires that the level of the input be lower than or
  equal to the level of the following output. Similarly,
in the monitored
  semantics of~\cite{CCD16}, this input would raise the monitor level to
  $\top$ and then the monitor would produce an error when applied to the output of level $\bot$.

\smallskip

A session is typable when its parallel components
can play as participants of a whole communication protocol or they are terminated.
To formalise this we need some definitions. 

The {\em subtyping} relation $\leq$ between session types as defined in Table \ref{tab:sync:subt} is simply the set-theoretic inclusion between intersections and unions. 
The double line in these rules means that subtyping is co-inductively defined.

\begin{mytable}{h}
\centerline{$
\begin{array}{@{}c@{}}
\inferrule[\rulename{sub-end}]{}
  {\tend \subt \tend}\qquad
\cinferrule[\rulename{sub-in}]{
\forall i\in I: \quad \T_i \subt\T_i' }{
  \bigwedge_{i\in I\cup J} \tin\pp{\lambda_i}{\SortI{\SOT}{i}{\level}{\topic}}.\T_i \subt \bigwedge_{i\in I}\tin\pp{\lambda_i}{\SortI{\SOT}{i}{\level}{\topic}}.\T_i'
}
\qquad
\cinferrule[\rulename{sub-out}]{
 \forall i\in I: \quad \T_i \subt \T'_i  
}{\bigvee_{i\in I}
  \tout\pp{\lambda_i}{\SortI{\SOT}{i}{\level}{\topic}}.\T_i
  \subt\bigvee_{i\in I\cup J}
 \tout\pp{\lambda_i}{\SortI{\SOT}{i}{\level}{\topic}}.\T_i'
 }
\end{array}
$}
\caption{\label{tab:sync:subt} Subtyping rules.}
\end{mytable}

The {\em projection} of the global type $\G$ 
on participant $\pp$, notation $\proj \G \pp$, is as usual~\cite{CHY08}, and it is reported in Table \ref{tab:Npro}. 
We shall consider projectable global types only§.

\begin{mytable}{h} 
\centerline{$\begin{array}{c}\proj {\Gvti {\level} {\topic} \pp\q \lambda \SOT {\G}} \pr=\begin{cases}
\bigvee_{i\in I}\tout\q{\lambda_i}{\SortI{\SOT}{i}{\level}{\topic}}. \proj{\G_i}\pr     & \text{if }\pr=\pp, \\
\bigwedge_{i\in I}\tin\pp{\lambda_i}{\SortI{\SOT}{i}{\level}{\topic}}. \proj{\G_i}\pr     & \text{if }\pr=\q, \\
  \proj{\G_i}\pr    & \text{if $\pr\not=\pp$, $\pr\not=\q$}\\
  & \text{and $\proj{\G_i}\pr=\proj{\G_j}\pr$ for all $i,j\in I$}.
\end{cases}\\[7mm]
\proj {\mu\ty.\G}\pr=\begin{cases}
 \proj{\G}\pr     & \text{if $\pr$ occurs in $\G$}, \\
 \tend     & \text{otherwise}.
\end{cases}\qquad\qquad \proj{\ty}\pr=\ty\qquad\qquad \proj{\tend}\pr=\tend
\end{array}
$}
\caption{Projection of global types onto participants.}
\label{tab:Npro}
\end{mytable}

We define the set $\PART\G$ of participants of a global type $\G$ as expected: 
\begin{myformula}{
\begin{array}{c}
 \PART{\Gvti {\level} {\topic} \pp\q \lambda \SOT {\G}}=\{\pp,\q\}\cup\PART{\G_i} ~(i\in I)\footnotemark\\
 \PART{\mu\ty. \G}=\PART{\G} 
\qquad
  \PART\ty=\emptyset \qquad \PART\tend=\emptyset \end{array}
}\end{myformula}
 \footnotetext{The projectability of $\G$ assures $\PART{\G_i}=\PART{\G_j}$ for all $i,j\in I$.}

 We can now explain the typing rule for sessions:
 \myformula{\myrule{\forall i \in \{1, \ldots, n \}: \quad\derS {}{\PP_i}{\T_i}{} \qquad\T_i\leq\proj\G
     {\pp_i}\quad 
\quad \PART \G\subseteq\{\pp_1,\ldots,\pp_n\} } {\ders{\pa
       {\pp_1}\PP_1\pc\ldots|\,\pa{\pp_n}\PP_n }{\G} }{[t-sess]}} 
Note that all $\pp_i$ must be distinct, since
 the premise assumes $\{\pp_1,\ldots,\pp_n\}$ to be a set of
  $n$ elements.
 The condition $\T_i\leq\proj\G {\pp_i}$ assures that the type of the
 process paired with participant $\pp_i$ is ``better'' than the
 projection of the global type $\G$ on $\pp_i$.  The inclusion of
 $\PART \G$ in the set $\{\pp_1,\ldots,\pp_n\}$ allows 
 sessions containing $\pa\pp\inact$ to be typed, a property needed to assure
 invariance of types under structural congruence.
 
 \begin{myexampleB}
The communication protocol described in Examples~\ref{ex:PC-informal} and ~\ref{ex2}, Item 3,
can be formalised (omitting labels) by the global type:
\myformula{\pp_1\to\pp_0:\str^{\level,\topic}.\pp_1\to\pp_0:\str^{\bot,\topicpsi}.\pp_0\to\pp_2:\str^{\bot,\topicpsi}.\pp_2\to\pp_0:\str^{\bot,\topicpsi}.\pp_0\to\pp_1:\str^{\bot,\topicpsi}.\End}
where $\str$ is short for $\String$.

The session type of the PC chair $\p_0$ is then:
\myformula{\pp_1?\str^{\level,\topic}.\pp_1?\str^{\bot,\topicpsi}.\pp_2!\str^{\bot,\topicpsi}.\pp_2?\str^{\bot,\topicpsi}.\pp_1!\str^{\bot,\topicpsi}.\End}
This type is safe, 
since $\topic$ and $\topicpsi$ are unrelated. In fact we can check
that
\myformula{\pair\level\topic\prec\pp_1?\str^{\bot,\topicpsi}.\pp_2!\str^{\bot,\topicpsi}.\pp_2?\str^{\bot,\topicpsi}.\pp_1!\str^{\bot,\topicpsi}.\End.}

\end{myexampleB}

\mysection{Main Properties}\label{sec:results}

The basic soundness property of the typing system w.r.t. operational
semantics is subject reduction. As usual with types expressing
communications, the reduction of sessions ``consumes'' the types. This
consumption can be formalised by means of a reduction. In our system
we need to reduce both session types and global types.

The reduction of session types is the smallest pre-order relation closed under the rules:
\begin{myformula}{\begin{array}{lll}
\T\vee\T'\RedT\T&\qquad \pp!\lambda(S^{\level,\topic}).\T\RedT\T&\qquad\bigwedge_{i\in I}\tin\pp{\lambda_i}{\SortI{\SOT}{i}{\level}{\topic}}.\T_i \RedT\T_i
\end{array}}\end{myformula}
These rules mimic respectively internal choice, output and external
choice among inputs.

\begin{mytable}{b} 
\centerline{$\begin{array}{c}\redG{(\Gvti {\level} {\topic} \participr \particips \lambda \SOT {\G})}\pp\lambda\q=\begin{cases}
  \G_{i_0}    & \text{if }\participr=\pp,\\
  & \particips=\q,\\
  & \lambda_{i_0}=\lambda~~i_0 \in I\\
   \Gvtir {\level} {\topic} \participr \particips \lambda \SOT {\G}
   & \text{otherwise}
\end{cases}\\\\
\redG{(\mu\ty.\G)}\pp\lambda\q= \mu\ty.\redG{\G}\pp\lambda\q
\end{array}$}
\caption{Residual after a communication.}
\label{tab:eg}
\end{mytable}

The reduction of global types is the smallest pre-order relation closed under the rule:
\begin{myformula}{\G\RedG\redG\G\pp\lambda\q}\end{myformula}
where $\redG\G\pp\lambda\q$ is the global type obtained from $\G$ by
executing the communication $\pp\lts{\lambda}\q$. We dub
$\redG\G\pp\lambda\q$ the {\em residual} after the communication
$\pp\lts{\lambda}\q$ in the global type $\G$, whose definition is given in
Table \ref{tab:eg}.  Notice that $\redG\G\pp\lambda\q$ is defined only
if $\pp\lts{\lambda}\q$ occurs in $\G$, since both
$\redG\tend\pp\lambda\q$ and $\redG\ty\pp\lambda\q$ are undefined.
For example, if
$\G=\pr\to\s:\lambda'(\tnat^{\bot,\topic}).\pp\to\q:\lambda(\tbool^{\top,\topicpsi})$,
then
\myformula{\redG\G\pp\lambda\q=\pr\to\s:\lambda'(\tnat^{\bot,\topic}).}
The reduction rule for global types is more
  involved
  than that for session types, since the global types do not prescribe
  an order on communications between disjoint pairs of participants.

\smallskip

We can now show that session reduction transforms the global type of a
session into its residual, and the session types of the processes into
their reducts.  Besides substitution and inversion lemmas, the proof
of subject reduction is based on the relations between subtyping,
projection and erasure of communications.
\begin{lemma}\label{lem:substitution}
If $\Gamma \vdash \e:\SOT^{\level,\topic}$ and $\e\downarrow\val^{\level,\topic}$ and
	$\derS{\Gamma, \x:\SOT^{\level,\topic}}{\PP}{\T}{}$, then
	$\derS{\Gamma}{\PP\Subst{\val^{\level,\topic}}{\x}}{\T}{}$.
\end{lemma}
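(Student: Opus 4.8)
The plan is to proceed by induction on the derivation of $\derS{\Gamma, \x:\SOT^{\level,\topic}}{\PP}{\T}{}$, replacing the free occurrences of $\x$ by $\val^{\level,\topic}$ throughout. Before starting the main induction I would record a companion statement for expressions: if $\dere{\Gamma, \x:\SOT^{\level,\topic}}{\e'}{\SOT'^{\level',\topic'}}$ and $\e\downarrow\val^{\level,\topic}$, then $\dere{\Gamma}{\e'\Subst{\val^{\level,\topic}}{\x}}{\SOT'^{\level',\topic'}}$. This is a routine induction on the typing rules for expressions in Table~\ref{tab:exp-type}: the base case \rulename{exp-var} splits on whether the variable is $\x$ (use \rulename{exp-val} for $\val^{\level,\topic}$, noting that the sort, level and topic recorded in $\Gamma$ for $\x$ match those of $\val^{\level,\topic}$ by hypothesis) or a different variable (then substitution is vacuous), while \rulename{exp-val} is immediate and \rulename{exp-op} follows by applying the two inductive hypotheses and reassembling with the same operator typing — in particular the common-topic constraint and the $\level_1\sqcup\level_2$ bookkeeping are preserved verbatim.

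For the main induction I would go through the process typing rules of Table~\ref{Ntab:sync:typing}. The cases \rulename{t-$\inact$} and \rulename{t-var} are trivial since the substitution does not affect $\inact$ or a process variable $Y\neq X$ (and $\x$, being an expression variable, never clashes with a process variable). For \rulename{t-out} the premises are $\dere{\Gamma,\x:\SOT^{\level,\topic}}{\e'}{\SOT'^{\level',\topic'}}$ and $\derS{\Gamma,\x:\SOT^{\level,\topic}}{\PP'}{\T'}{\Delta}$; apply the expression companion statement to the first and the inductive hypothesis to the second, then reapply \rulename{t-out}, observing that the output type $\q!\lambda(\SOT'^{\level',\topic'}).\T'$ and the environment update $\Delta\sqcap\set{\topic':\level'}$ are unchanged because the substitution touches neither. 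For \rulename{t-in} the premise is $\derS{\Gamma,\x:\SOT^{\level,\topic},y:\SOT'^{\level',\topic'}}{\Q'}{\T'}{\Delta}$ with $y$ the bound variable; by $\alpha$-conversion I assume $y\notin\mathrm{dom}(\Gamma)\cup\set{\x}$ and $y$ not free in $\val^{\level,\topic}$, so the two environment entries can be swapped freely and the inductive hypothesis applies to yield $\derS{\Gamma,y:\SOT'^{\level',\topic'}}{\Q'\Subst{\val^{\level,\topic}}{\x}}{\T'}{\Delta}$, whence \rulename{t-in} rebuilds the conclusion; here one uses that $(\procin{\pp}{\lambda(y)}{\Q'})\Subst{\val^{\level,\topic}}{\x} = \procin{\pp}{\lambda(y)}{\Q'\Subst{\val^{\level,\topic}}{\x}}$. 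The cases \rulename{t-i-choice} and \rulename{t-e-choice} distribute the substitution over the two subprocesses and combine the inductive hypotheses under $\vee$ respectively $\wedge$, with the environment meet $\Delta_1\sqcap\Delta_2$ carried along unchanged. For \rulename{t-rec}, $\alpha$-rename $X$ so that it differs from any process variable in $\Gamma$ (no clash with $\x$ is possible anyway), push the substitution inside $\mu X.\PP'$, apply the inductive hypothesis to the premise $\derS{\Gamma,X:\T}{\PP'}{\T}{\Delta}$ — noting that the entry $X:\T$ in the environment is untouched by an expression substitution — and reapply \rulename{t-rec}.

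I do not expect a genuine obstacle here; the lemma is the standard substitution lemma and every rule is structural. The only points that require mild care are the variable-binding cases: in \rulename{t-in} one must invoke $\alpha$-conversion to keep the bound expression variable $y$ fresh for $\val^{\level,\topic}$ and distinct from $\x$, and likewise in \rulename{t-rec} for the process variable, so that the substitution commutes with the binder and the environment weakening/exchange used silently is justified. One should also note the slight notational informality that the statement is for a closed typing ($\Delta$ empty in the conclusion as written), but since no rule constrains the substitution's interaction with the $\Delta$ component, the same argument goes through uniformly for an arbitrary $\Delta$, which is convenient because \rulename{t-out} and the choice rules produce non-empty $\Delta$'s in their premises.
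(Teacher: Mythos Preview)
Your proposal is correct and follows exactly the standard route the paper has in mind: the paper's own proof is the single word ``Standard.'' Your expansion---the companion substitution lemma for expressions followed by induction on the process typing derivation, with the usual $\alpha$-conversion hygiene at the binders in \rulename{t-in} and \rulename{t-rec}---is precisely how one unpacks that word, and your remark about carrying an arbitrary $\Delta$ through the induction is a sensible way to make the induction hypothesis strong enough for the \rulename{t-out} and choice cases.
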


\begin{proof} Standard. \end{proof}

\begin{mylemma}\label{lem:generation}
\hfill
\begin{myenumerate}
\item\label{lem:generation2} If $\derS{\Gamma}{\procout \pp  {\lambda(\e)} \PP} {\T}{}$, then $\T = \pp !\lambda(\SOT^{\level,\topic}).\T'$ and
	$\Gamma \vdash \e: \SOT^{\level,\topic}$ and $\derS{\Gamma}{\PP} {\T'}{}$.
	\item\label{lem:generation1} If $\derS{\Gamma}{\procin{\pp}{\lambda(\x)}{\PP}}{\T}{}$, then $\T = \pp ? \lambda(\SOT^{\level,\topic}).\T'$
	and $\derS{\Gamma, \x: \SOT^{\level,\topic}}{\PP}{\T'}{}$.
\item\label{lem:generation4} If $\derS{\Gamma}{\PP \oplus \Q}{\T}{}$, then $\T = \T_1 \vee \T_2$ and $\derS{\Gamma}{\PP}{\T_1}{}$
	and $\derS{\Gamma}{\Q}{\T_2}{}$.	
\item\label{lem:generation3} If $\derS{\Gamma}{\PP + \Q}{\T}{}$, then $\T = \T_1 \wedge \T_2$ and $\derS{\Gamma}{\PP}{\T_1}{}$
	and $\derS{\Gamma}{\Q}{\T_2}{}$.
\item\label{lem:generation5} If $\ders{\pa {\pp_1}\PP_1\pc\ldots|\,\pa{\pp_n}\PP_n
  }{\G}$, then $\derS {}{\PP_i}{\T_i}{}$ and $\T_i\leq\proj\G {\pp_i}$ for $1\leq i\leq n$ and
   $\PART \G\subseteq\{\pp_1,\ldots,\pp_n\}$.
\end{myenumerate}
\end{mylemma}
\begin{proof} By observing that the type assignment system for processes and multiparty sessions is syntax directed. \end{proof}

\begin{mylemma}\label{lem:erase}
 If $\tout\q\lambda{\SOT^{\level,\topic}}.{\T}\leq\proj\G\pp$ and $\tin\pp\lambda{\SOT^{\level,\topic}}.{\T'}\wedge\T''\leq\proj\G\q$, then $\T\leq\proj{(\redG\G\pp\lambda\q)}\pp$ and $\T'\leq\proj{(\redG\G\pp\lambda\q)}\q$. Moreover $\proj\G\participr=\proj{(\redG\G\pp\lambda\q)}\participr\,$ for $\participr\not=\pp$, $\participr\not=\q$.
\end{mylemma}
\begin{proof} By induction on $\G$ and by cases on the definition of $\redG\G\pp\lambda\q$.  Notice that    $\G$ can only be $\Gvti {\level} {\topic} {\particips_1} {\particips_2} \lambda \SOT {\G}$ with either $\particips_1=\pp$ and $\particips_2=\pq$ or $\set{\particips_1,\particips_2}\cap\set{\pp,\q}=\emptyset$, since otherwise the types in the statement of the lemma could not be subtypes of the given projections of $\G$. 

 If $\G=\Gvti {\level} {\topic} \pp \q \lambda \SOT {\G}$, then $\proj\G\pp=\bigvee_{i\in I}\tout \q{\lambda_i}{\SortI{\SOT}{i}{\level}{\topic}}.\proj{\G_i}\pp$  and 
            \mbox{$\proj\G\q=\bigwedge_{i\in I}\tin\pp{\lambda_i}{\SortI{\SOT}{i}{\level}{\topic}}.\proj{\G_i}\q$.} From $\tout\q\lambda{\SOT^{\level,\topic}}.{\T}\leq\bigvee_{i\in I}\tout \q{\lambda_i}{\SortI{\SOT}{i}{\level}{\topic}}.\proj{\G_i}\pp$ we get $\lambda=\lambda_{i_0}$ and $\T\leq\proj{\G_{i_0}}\pp$ for some $i_0\in I$. From $\tin\pp\lambda{\SOT^{\level,\topic}}.{\T'}\wedge\T''\leq\bigwedge_{i\in I}\tin\pp{\lambda_i}{\SortI{\SOT}{i}{\level}{\topic}}.\proj{\G_i}\q$ and $\lambda=\lambda_{i_0}$ we get $\T'\leq\proj{\G_{i_0}}\q$. This implies \myformula{\T\leq\proj{(\redG\G\pp\lambda\q)}\pp\text{ and }\T'\leq\proj{(\redG\G\pp\lambda\q)}\q,} since $\proj{(\redG\G\pp\lambda\q)}\pp=\proj{\G_{i_0}}\pp$ and $\proj{(\redG\G\pp\lambda\q)}\q=\proj{\G_{i_0}}\q$. If $\participr\not=\pp$, $\participr\not=\q$, then by definition of projection $\proj\G\participr=\proj{\G_{i_0}}\participr$ for an arbitrary $i_0\in I$, and then $\proj\G\participr=\proj{(\redG\G\pp\lambda\q)}\participr$ by definition of residual.

            If $\G=\Gvti {\level} {\topic} {\particips_1}
            {\particips_2} \lambda \SOT {\G}$ and
            $\set{\particips_1,\particips_2}\cap\set{\pp,\q}=\emptyset$,
            then $\proj\G\pp=\proj{\G_{i_0}}\pp$ and
            $\proj\G\q=\proj{\G_{i_0}}\q$ for an arbitrary $i_0\in
            I$. By definition of residual
            \myformula{\redG\G\pp\lambda\q=\Gvtir {\level} {\topic}
              {\particips_1}{\particips_2 }\lambda \SOT {\G},} which
            implies
            $\proj{(\redG\G\pp\lambda\q)}\pp=\proj{(\redG{\G_{i_0}}\pp\lambda\q)}\pp$
            and
            $\proj{(\redG\G\pp\lambda\q)}\q=\proj{(\redG{\G_{i_0}}\pp\lambda\q)}\q$.\\ Notice
            that the choice of $i_0$ does not modify the projection,
            by definition of projectability.  We get
            $\tout\q\lambda{\SOT^{\level,\topic}}.{\T}\leq\proj{\G_{i_0}}\pp$
            and
            $\tin\pp\lambda{\SOT^{\level,\topic}}.{\T'}\wedge\T''\leq\proj{\G_{i_0}}\q$,
            which imply by induction
            \mbox{$\T\leq\proj{(\redG{\G_{i_0}}\pp\lambda\q)}\pp$} and
            $\T'\leq\proj{(\redG{\G_{i_0}}\pp\lambda\q)}\q$.\\ 
\indent            
Let $\participr\not=\pp$, $\participr\not=\q$.\\ 
If $\participr=\particips_1$, then $\proj\G\participr=\bigvee_{i\in
  I}\tout
{\particips_2}{\lambda_i}{\SortI{\SOT}{i}{\level}{\topic}}.\proj{\G_i}\participr$
and \myformula{\proj{(\redG\G\pp\lambda\q)}\participr=\bigvee_{i\in
    I}\tout
  {\particips_2}{\lambda_i}{\SortI{\SOT}{i}{\level}{\topic}}.\proj{(\redG{\G_i}\pp\lambda\q)}\participr,}
so we may conclude, since by induction $\proj{\G_i}\participr=\proj{(\redG{\G_i}\pp\lambda\q)}\participr$ for all $i\in I$.\\
If $\participr=\particips_2$, then $\proj\G\participr=\bigwedge_{i\in
  I}\tin
{\particips_1}{\lambda_i}{\SortI{\SOT}{i}{\level}{\topic}}.\proj{\G_i}\participr$
and \myformula{\proj{(\redG\G\pp\lambda\q)}\participr=\bigwedge_{i\in
    I}\tin
  {\particips_1}{\lambda_i}{\SortI{\SOT}{i}{\level}{\topic}}.\proj{(\redG{\G_i}\pp\lambda\q)}\participr,}
so we may conclude using induction as in the previous case.\\
If $\participr\not\in\set{\particips_1,\particips_2}$, then
$\proj\G\participr=\proj{\G_{i_0}}\participr$ and
$\proj{(\redG\G\pp\lambda\q)}\participr=\proj{(\redG{\G_{i_0}}\pp\lambda\q)}\participr$
for an arbitrary $i_0\in I$. Again, we can conclude using induction.
\end{proof}

\begin{mytheorem}{Subject reduction}\label{thm:SR}
If $\,\pa {\pp}\PP\pc\N \lts{\kappa}\pa {\pp}\PP'\pc \N'$, ~
$\ders{\pa {\pp}\PP\pc\N}\G$ and $\derS {}{\PP}{\T}{}$, then:
\begin{myenumerate}
\item $\ders{\pa {\pp}\PP'\pc\N'}\G'$ for some $\G'$ such that $\G\RedG^*\G'$;
\item $\derS {}{\PP'}{\T'}{}$ for some $\T'$ such that $\T\RedT^*\T'$.
\end{myenumerate}
\end{mytheorem}

\begin{proof}
We only consider the more interesting reduction, i.e., when $\PP$ is reduced. 
We distinguish three cases according to the shape of
$\kappa$. 

\smallskip

{\em Case $\kappa = \tau$:} then $\PP\equiv\PP_1\oplus\PP_2$ and $\PP'\equiv\PP_1$ and $\N'\equiv\N$. By Lemma~\ref{lem:generation}(\ref{lem:generation5}) and (\ref{lem:generation4}) $\T\leq\proj\G\pp$ and $\T=\T_1\vee\T_2$ and 	$\derS {}{\PP_1}{\T_1}{}$. We can then choose $\G'=\G$ and $\T'=\T_1$. 

\smallskip

{\em  Case $\kappa = \comm\pp \val{\level}\lambda{\topic}\q$:} then $\PP\equiv\procout \q  {\lambda(\e)} {\PP'}$ and $\N\equiv\pa\q\procin{\pp}{\lambda(\x)}{\Q_1}+\Q_2\pc\N''$ and \myformula{\N'\equiv\pa\q{\Q_1\sub{\val^{\level,\topic}}{\x}}\pc\N'',} where $\eval{\e}{\val^{\level,\topic}}$. By Lemma~\ref{lem:generation}(\ref{lem:generation5}) and (\ref{lem:generation2}) $\T\leq\proj\G\pp$ and $\T=\q !\lambda(\SOT^{\level,\topic}).\T'$ and
	$ \vdash \e: \SOT^{\level,\topic}$ and \mbox{$\derS{}{\PP'} {\T'}{}$.} By Lemma~\ref{lem:generation}(\ref{lem:generation5}) and (\ref{lem:generation3}) and (\ref{lem:generation1}) \mbox{$\T_1\wedge\T_2\leq\proj\G\q$} and  $\derS{}{\procin{\pp}{\lambda(\x)}{\Q_1}}{\T_1}{}$ and $\derS{}{\Q_2}{\T_2}{}$ and $\T_1 = \pp ? \lambda(\SOT_1^{\level',\topicpsi}).\T_1'$
	and $\derS{\x: \SOT_1^{\level',\topicpsi}}{\Q_1}{\T'_1}{}$. From $\q !\lambda(\SOT^{\level,\topic}).\T'\leq\proj\G\pp$ and $\pp ? \lambda(\SOT_1^{\level',\topicpsi}).\T_1'\wedge\T_2\leq\proj\G\q$ we get $\SOT=\SOT_1$ and $\level=\level'$ and $\topic=\topicpsi$. 
	By Lemma~\ref{lem:substitution} $ \vdash \e: \SOT^{\level,\topic}$ and $\derS{\x: \SOT^{\level,\topic}}{\Q_1}{\T'_1}{}$ imply 
	$\derS{}{\Q_1\sub{\val^{\level,\topic}}{\x}}{\T'_1}{}$. Then we choose $\G'=\redG\G\pp\lambda\q$, since Lemma~\ref{lem:erase} gives $\T'\leq\proj{(\redG\G\pp\lambda\q)}\pp$ and $\T_1'\leq\proj{(\redG\G\pp\lambda\q)}\q$ and the same projections for all other participants of $\G$.

\smallskip
	
{\em Case $\kappa = \comm\q \val{\level}\lambda{\topic}\pp$:} then $\PP\equiv\procin{\q}{\lambda(\x)}{\PP_1}+\PP_2$  and $\N\equiv\pa\q\procout \pp  {\lambda(\e)} {\Q}\pc\N''$ and $\PP'=\PP_1\sub{\val^{\level,\topic}}{\x}$ and $\N'\equiv\pa\q{\Q}\pc\N''$, where $\eval{\e}{\val^{\level,\topic}}$. By Lemma~\ref{lem:generation}(\ref{lem:generation5}) and (\ref{lem:generation3}) and (\ref{lem:generation1}) $\T=\T_1\wedge\T_2\leq\proj\G\pp$ and  \mbox{$\derS{}{\procin{\q}{\lambda(\x)}{\PP_1}}{\T_1}{}$} and $\derS{}{\PP_2}{\T_2}{}$ and $\T_1 = \q ? \lambda(\SOT^{\level,\topic}).\T'$
	and $\derS{\x: \SOT^{\level,\topic}}{\PP_1}{\T'}{}$. 	
	By Lemma~\ref{lem:generation}(\ref{lem:generation5}) and (\ref{lem:generation2}) $\T_3\leq\proj\G\q$ and $\derS{}{\procout \pp  {\lambda(\e)} {\Q}}{\T_3}{}$ and $\T_3=\pp !\lambda(\SOT_1^{\level',\topicpsi}).\T_3'$ and
	$ \vdash \e: \SOT_1^{\level',\topicpsi}$ and $\derS{}{\Q} {\T_3'}{}$. From \mbox{$\q ? \lambda(\SOT^{\level,\topic}).\T'\wedge\T_2\leq\proj\G\pp$} and $\pp !\lambda(\SOT_1^{\level',\topicpsi}).\T_3'\leq\proj\G\q$ we get $\SOT=\SOT_1$ and $\level=\level'$ and $\topic=\topicpsi$. 
	By Lemma~\ref{lem:substitution} $ \vdash \e: \SOT^{\level,\topic}$ and $\derS{\x: \SOT^{\level,\topic}}{\PP_1}{\T'}{}$ imply 
	$\derS{}{\PP_1\sub{\val^{\level,\topic}}{\x}}{\T'}{}$. Then we
        take $\G'=\redG\G\q\lambda\pp$, since Lemma~\ref{lem:erase}
        gives $\T'\leq\proj{(\redG\G\pp\lambda\q)}\pp$ and
        $\T_3'\leq\proj{(\redG\G\pp\lambda\q)}\q$  and the same
        projections for all other participants of $\G$. 
\end{proof}

We may now prove our main result:

\begin{mytheorem}{Soundness}\label{thm:type-safe}
If $\N$ is typable,  then $\N$ is safe.
\end{mytheorem}

\begin{proof}
Suppose that $\N$ is safely typed.  If $\N$ generates the trace
$\sigma \cdot\comm\pp \val{\level}\lambda{\topic}\q$, then
\myformula{\N \lts{~~~\sigma~~~}\pa\pp{\PP }\pc\pa\pq{\Q}
  \pc\N'\lts{\comm\pp \val{\level}\lambda{\topic}\q}
  \pa\pp{\PP'}\pc\pa\pq{\Q'}\pc \N'.}  From $\pa\pp{\PP }
\pc\pa\pq{\Q}\pc\N' \lts{\comm\pp\val{\level}\lambda{\topic}\q}
\pa\pp{\PP'}\pc\pa\pq{\Q'}\pc \N'$ we get that $\PP \equiv
\procout{\q}{\lambda(\e)}\PP'$ for some $\e$ such that
$\eval{\e}{\val^{\level,\topic}}$, and $\Q\equiv\procin \pp
{\lambda(\x)} \Q_1+\Q_2$.  By
Lemma~\ref{lem:generation}(\ref{lem:generation5}), there are types
$T_P$ and $T_Q$ such that $\derS{}{\PP}{T_\PP}{}$ and
$\derS{}{\Q}{T_\Q}{}$. 
By Lemma~\ref{lem:generation}(\ref{lem:generation2}), $T_P$ must be of
the form $T_P = \tout\q\lambda{\SOT^{\level,\topic}}.\T'_\PP$.  Then
the safety of $\tout\q\lambda{\SOT^{\level,\topic}}.\T'_\PP$ (more
specifically, the premise of Rule \rulename{safe-out}) implies that
$\level\levleq\rho(\participq,\topic)$. This concludes the proof of
Property 1 of session safety (AC).

Suppose now that the above computation continues as follows:
\begin{myformula}{\pa\pp{\PP'}\pc\pa\pq{\Q'}\pc \N' 
 \lts{~\quad\sigma'\quad~}  \pa\pq {\Q_2} \pc \N''
 \lts{\comm\q \valu{\level'}{\lambda'}{\topicpsi}\participr}  \pa\pq \Q''' \pc \N'''
}\end{myformula}
namely, that the trace $\sigma \cdot\comm\pp
\val{\level}\lambda{\topic}\q$ is extended to the relay trace $\sigma \cdot\comm\pp \val{\level}\lambda{\topic}\q \cdot\sigma' \cdot\comm\q \valu{\level'}{\lambda'}{\topicpsi}\participr$.
From
$\Q\equiv\procin \pp  {\lambda(\x)} \Q_1+\Q_2$, by 
Lemma~\ref{lem:generation}(\ref{lem:generation3}) we get $T_\Q = \T_1 \wedge
\T_2$ and $\derS{}{\procin \pp  {\lambda(\x)} \Q_1}{\T_1}{}$. By
applying now 
Lemma~\ref{lem:generation}(\ref{lem:generation1}),
we obtain $\T_1 = \pp ? \lambda(\SOT^{\level,\topic}).\T'_1$ and
$\derS{\x: \SOT^{\level,\topic}}{\Q_1}{\T'_1}{}$. From this, since
$\Q' = \Q_1 \sub{\val^{\level,\topic}}{\x} $, we infer 
$\derS{}{\Q'}{\T'_1}{}$. By Theorem~\ref{thm:SR},
$\derS{}{\Q'}{\T'_1}{}$ implies $\derS{}{\Q''}{\T''_1}{}$ for some
$\T''_1$ such that $\T'_1\RedT\T''_1$.  Now, since \mbox{$\pa\q {\Q''} \pc \N''
  \lts{\comm\q \valu{\level'}{\lambda'}{\topicpsi}\participr} \pa\q
  \Q''' \pc \N'''$}, we have $\Q''\equiv \procout \participr
{\lambda'(\e')} \Q'''$ for some $\e'$ such that
$\eval{\e'}{\valu^{\level',\topicpsi}}$.
By 
Lemma~\ref{lem:generation}(\ref{lem:generation2}), $T''_1 =
{\tout\participr {\lambda'}{\SOT_1^{\level',\topicpsi}}.\T'''_1}{}$.
Now, the
safety of $T_1 = \tin\p \lambda{\SOT^{\level,\topic}}.\T'_1$ 
(and more specifically, the premise of Rule \rulename{safe-in})
implies that $\pair\level\topic\prec T'_1$ and therefore also
$\pair\level\topic\prec T''_1 = \tout\participr
{\lambda'}{\SOT_1^{\level',\topicpsi}}.\T'''_1$, since $\T''_1$ is
obtained by reducing $\T'_1$ (and therefore $\T''_1$ is a subterm of $\T'_1$).
Then $\level \levleq \level'$ or
$\diff\topic\topicpsi$ by definition of agreement (Rule
\rulename{agr-out}).
This concludes the proof of
  Property 2 of session safety (LF).
\end{proof}

\mysection{Related and Future Work}\label{sec:futrel} 
We introduced the notion of topic as a way to relax
  security type systems for session calculi. We focussed on 
  multiparty rather than binary sessions, as security issues
  appear to be less relevant for binary sessions. Indeed, binary
  sessions may often be viewed as client-server interactions,
  where one can assume that the client chooses the server (and thus to
  some extent trusts it) and that the server is protected against
  malicious clients.  On the other hand, in a multiparty session the
  parties are symmetric peers which may not know each other and thus
  require to be protected against each other.
  
  \smallskip

The first multiparty session calculus with synchronous communication
was presented in~\cite{BY09}. Here we considered an enrichment of the
calculus of~\cite{DGJPY15} with security and types. The base calculus
is admittedly very simple, as it cannot describe parallel and
interleaved sessions, and its type system only allows internal choices
among outputs and external choices among inputs. Our version is even
simpler than that of ~\cite{DGJPY15} since 
the syntax does not include the conditional construct. The advantage
of this minimal setting is that the safety property, which covers both
access control and leak freedom, enjoys a particularly simple
definition. In particular, leak freedom amounts to a condition on
\emph{mediators}, which are participants acting as a bridge between a
sender and a receiver. This condition says that after receiving high
information by the sender on some topic, the mediator should not send
low information to the receiver on a related topic.  
%

\smallskip

It can be argued that topics are orthogonal to structured
  communication features, and could therefore be studied in a more
  general setting. However, within a structured communication
  the set of topics is delimited a priori, as specified by the global
  type, so the notion becomes more effective.
  
  \smallskip

  One further issue is that of expressiveness of topics. One may
    wonder whether the use of topics could be simulated by using other
    ingredients of our calculus, such as security levels, labels and
    base types.  Clearly, the independence of topics in the two end
    messages of a relay trace cannot be represented by the
    incomparability of their security levels: a safe relay trace
    $m_1\cdot \sigma\cdot m_2$ where $m_1$ and $m_2$ have unrelated
    topics could not be mimicked by the same trace with incomparable
    security levels for $m_1$ and $m_2$, since the latter is insecure
    in a classical LF approach. As for labels, they are meant to
    represent different options in the choice operators, so they are
    conceptually quite different from topics.

\smallskip

\noindent {\bf Related work.}
Compared to previous work on security-enriched multiparty session
calculi~\cite{CCD14,CCD16}, our definition of leak freedom is more
permissive in two respects: 
\begin{myenumerate}
\item A sequence of messages directed to the same participant is
  always allowed. In the calculi of~\cite{CCD14,CCD16}, where deadlocks could
  arise, it was necessary to prevent any low communication after a
  high communication (because the mere fact that the high
  communication could fail to occur would cause a leak). For instance
  the trace (omitting labels) $\commsimp{\pp}{\q}{v}{\top}{\topic}
  \cdot \commsimp{\pp'}{\q}{u}{\bot}{\topic}$ was rejected in those
  calculi, while it is allowed in the present one, which is
  deadlock-free. In our case it is only the content of a message that
  can be leaked, and therefore it is enough to focus on relay
  sequences made of a message \emph{to} a participant, followed by a
  message \emph{from} the same participant.
\item Thanks to the introduction of topics, the standard leak-freedom
  requirement can be relaxed also on relay sequences, by forbidding
  only downward flows between messages on correlated
  topics.
\end{myenumerate}

\noindent
One could see the use of topics as a way of implementing
  declassification (see \cite{SabSands09} for a survey). For instance,
  a relay trace whose end messages carry values $\val_1^{\top, \topic_1}$ and $\val_2^{\bot, \topic_2}$
  with independent topics $\topic_1$ and $\topic_2$ could be
  interpreted as the application of a {\em trusted function} (such as
  encryption \cite{SabSands09}) to transform a secret value $\val_1$
  into a public value $\val_2$.

\smallskip

\noindent {\bf Future work.}
We intend to explore further the relationship between topics and declassification.
Also, inspired by~\cite{LC15}, we plan to enrich the present calculus by
allowing levels and topics to depend on exchanged values. Indeed,
  it seems reasonable to expect
that a server should 
conform the levels and topics of 
its messages 
to its different kinds of clients.  For
example an ATM should receive credit card numbers with personalised
topics.

\myparagraph{Acknowledgments} 
We are grateful to the anonymous reviewers for their useful remarks.

\bibliographystyle{eptcs}
\bibliography{session-short}

\end{document}